\documentclass[conference,10pt]{IEEEtran}
\usepackage[latin1]{inputenc}
\usepackage{amsthm,amsfonts,amssymb,amsmath}
\interdisplaylinepenalty = 2500
\usepackage{ae}
\usepackage{color}

\usepackage{cite}
\usepackage{flushend}


\newtheorem{thm}{Theorem}[section]

\theoremstyle{definition}


\newcommand{\Z}{\mathbb Z}

\newcommand{\Qc}{\mathcal Q}
\newcommand{\C}{\mathbb C}

\newcommand{\E}{\mathcal E}

\newcommand{\cli}{\mathcal{C}\ell}        



\newcommand{\ket}[1]{\left\vert #1 \right>}
\newcommand{\bra}[1]{\left< #1 \right\vert}






\newcommand{\gera}[1]{\langle #1 \rangle}

\title{Decoder for Nonbinary CWS Quantum Codes}

\author{\IEEEauthorblockN{Nolmar Melo}
\IEEEauthorblockA{Laborat\'{o}rio Nacional de Computa\c{c}\~{a}o Cient\'{\i}fica\\
Petr\'{o}polis, RJ 25651-075, Brazil \\
Email: nolmar@lncc.br}
\and
\IEEEauthorblockN{Douglas F.G. Santiago}
\IEEEauthorblockA{Universidade Federal dos Vales do Jequitinhonha e Mucuri\\
Diamantina, MG 39100000, Brazil \\
Email: douglassant@gmail.com}
\and
\IEEEauthorblockN{Renato Portugal}
\IEEEauthorblockA{Laborat\'{o}rio Nacional de Computa\c{c}\~{a}o Cient\'{\i}fica\\
Petr\'{o}polis, RJ 25651-075, Brazil \\
Email: portugal@lncc.br}}

\begin{document}

\maketitle

\begin{abstract}
We present a decoder for nonbinary CWS quantum codes using the structure of union codes. The decoder runs in two steps: first, we use a union of stabilizer codes to detect a sequence of errors, and second, we build a new code, called union code, that allows the error correction.
\end{abstract}

\section{Introduction}

Quantum computers is able to solve many hard problems in polynomial time and to increase the speed of most algorithms~\cite{Shor1994, Grover1996, Mos09, CW10}. Decoherence problems are inherent in these computers requiring the use of quantum error correcting codes (QECCs)~\cite{Calderbank1996, Steane1996, Laflamme1996, Gottesman1996}.

A large class of good binary codes is known in literature~\cite{Knill1996, Grassl2009, Grassl2009a , Beigi2011}. However, in order to build a quantum fault tolerant quantum computer, concatenation of quantum codes plays a crucial role. The optimum concatenation is obtained when nonbinary codes are used~\cite{Beigi2011}.

An important class of nonadditive codes, called CWS, has been studied recently~\cite{Chuang2009, Chen2008, Cross2008, Smolin2007, Yu2008}. The framework of CWS codes generalizes the stabilizer code formalism and was used to build some good nonadditive codes. The codification of binary and nonbinary CWS codes is well known, whereas the decodification is only known for binary CWS codes~\cite{Yu2008, Li2010}. In this paper, we present an algorithm to decode nonbinary CWS codes, generalizing the procedure described in Ref.~\cite{Li2010}.


This article is divided in the following parts: In Section II, the CWS codes are briefly reviewed. In Section III, the theory of \textit{union codes} are presented. Such codes provide the basis for our decoder. In Section IV, measurement operators for union codes are presented. In Section V, the main theorems are proved. In Section VI, the nonbinary decoder is presented and an analysis of the computational cost is performed. In Section VII, an example is worked out. In Section VIII, the conclusions are presented.

\section{CWS Codes}

Nonbinary CWS codes use the generalized Pauli group $G_d$ over qudits~\cite{Ashikhmin2001, Ketkar2006, Hu2008, Chen2008}. Let $\chi$ be the character of group $\Z_d$ in the unit cycle of $\C$. By definition, $\chi(x)=\exp(\frac{i2\pi x}{d})$. We denote $\omega=\chi(1)$. The action of $X$ over qudits is $X\ket i = \ket{i+1}$ and the action of $Z$, which performs a phase shift, is $Z\ket{i}=\omega^{i}\ket{i}$. In matrix form, we have
\[
         X=\left(
                  \begin{array}{c|c}
                           0 & 1\\
                           \hline
                           I & 0
                  \end{array}
          \right)_{d\times d},\;\;
         Z=\left(
                  \begin{array}{ccccc}
                           1 & 0 & 0 & \dots & 0 \\
                           0 & \omega & 0 & \dots & 0 \\
                           0 & 0 & \omega^2 & \dots & 0 \\
                           \vdots & \vdots & \vdots & \ddots & 0 \\
                           0 & 0 & 0 & \dots & \omega^{d-1}
                  \end{array}
         \right).
\]
It is easy to check that $X^d=Z^d=I$ and $ZX=\omega XZ$. Pauli group $G_d$ is generated by $\{X, Z,\omega^jI\}$, where $j=0,\dots ,d-1$. The Pauli group acting on $n$ qudits is $G_d^n = \otimes^n G_d$.

A nonbinary CWS code is a nontrivial vector subspace of $\C^{d^n}$. If this subspace has dimension $K$, the code is denoted by $((n,K))_d$ and, if $\delta$ is the minimum distance, the notation is $((n,K,\delta))_d$. A CWS code is described by a subgroup $S$ of the Pauli group (called stabilizer group) and a set of $K$ Pauli operators $W=\{w_l\}_{l=1}^K$ called word operators.

Group $S$ stabilizes a single word, usually $\ket{S}$ and, in the binary case, we have $S=\gera{g_i,\dots,g_m}$ with $m=n$, whereas in the nonbinary case we have $m\geq n$. The generators of $S$ have the form $g_i=X^{r_i}Z^{t_i}$, where $r_i$ and $t_i$ are vectors with entries in $\Z_d$. We can build a matrix $[r|t]$ of dimensions $m\times 2n$, which is useful for establishing a connection with a classical code.

A basis for the quantum code is $\{w_l\ket{S}; \; w_l \in W \}$, where $\ket S$ is stabilized by group $S$. Note that for each $w_i\in W$, $w_l\ket{S}$ is stabilized by $w_lSw_l^{\dagger}$. Moreover, for $g_k\in S$, we have $w_lg_kw_l^\dagger= \omega^{l_k}g_k$ $\ket{w_l}$, so a classical vector on $\Z_d$
\[c_l = (l_1, \dots, l_m)\]
can be associated to $w_l$.

The error correction conditions for quantum codes state that, in order to detect a set of errors $\E$, it is necessary and sufficient that
\[
         \bra{\psi_i}E\ket{\psi_j} = C_E \delta_{ij},
\]
for all $E\in \E$, where $\ket{\psi_i}$ and $\ket{\psi_j}$ are in an orthonormal basis of the code. Note that $C_E$ does not depend on $i$ and $j$. $E_1, E_2\in \E$ are correctable if and only if
\[
          \bra{\psi_i}E_1^\dagger E_2\ket{\psi_j} = C_{E_1E_2} \delta_{ij}.
\]
Again, $C_{E_1E_2}$ does not depend on $i$ and $j$. An error is degenerate if $C_E\neq 0$. Two distinct errors, $E_1,E_2\in \E$, belong to the same degeneracy class when their actions on the code are the same, that is, $C_{E_1E_2}\neq 0$. A code is said degenerate when the error set has a degenerate element~\cite{Knill1997}.

It is enough to consider errors as operators in the Pauli group acting on the code, that is, an error $E$ has the form $\alpha Z^vX^u$, where $\alpha\in \C$ and $v,u\in \Z_{d}^n$. We can map errors to classical vectors using function $\cli_S$ given by
\begin{equation}
         \cli_S(E) = \sum_{l=1}^n v_lr_l - u_lt_l,
\end{equation}
where $r_l$ and $t_l$ are the columns of the matrix $[r|t]$. An error $E$ is detectable in the quantum code if and only if $\cli_S(E)$ is detectable in the associated classical code and $\cli_S(E)\neq 0$ or $\forall l,\; w_lE=Ew_l$. An error is degenerate when $\cli_S(E)=0$ and two distinct errors, $E_1,E_2\in \E$, belong to the same degeneracy class when $\cli_S(E_1)=\cli_S(E_2)$~\cite{Cross2008}.

\section{Union and USt Codes}

Let $S$ be the stabilizer group of code $C=[[n,k,d']]_d$. Let $\mathcal{C}_C$ be the centralizer of $S$ and $\mathcal{T}$ be a subset of a transversal set of $G_d^n/\mathcal{C}_C$. Then, the set
\begin{equation}
         \mathcal{Q} = \bigoplus_{t\in \mathcal{T}}tC
\end{equation}
is a quantum code with parameters $((n,Kd^k,d''))_d$, where $K=\#\mathcal{T}$ and $d'' \leq d'$.

Note that if $t_1,t_2\in \mathcal{T}$ are distinct, then $t_1C\perp t_2C$. In fact, $t_1^\dagger t_2\notin\mathcal{C}_C $, therefore there exists $s\in S$ such that $st_1^\dagger t_2 = \alpha t_1^\dagger t_2 s$ and $\alpha\neq 1$, that obeys $\bra{i}t_1^\dagger t_2\ket{j} =0$, where $\ket{i},\ket{j}\in C$. If $B=\{\ket{w_i}\}$ is a base for $C$, then $\bigcup_{t\in  \mathcal{T}}tB$ is a basis for $\mathcal{Q}$. $\mathcal{Q}$ is called Union Stabilizer Code(USt)~\cite{Grassl2008a, Grassl2008, Li2010, Li2010a}.

More general yet is what we call a quantum union code. Let $C_1$ and $C_2$ be two quantum codes with parameters $((n,K_1,d_1))$ and $((n,K_2,d_2))$, respectively. Let $B_1$ ($B_2$) be an orthogonal basis of $C_1$ ($C_2$). Suppose that $C_1\perp C_2$, then $B=B_1 \bigcup B_2$ also is a basis for a vector space, which is the union code. Note that the union code is given by $C=C_1\oplus C_2.$ Note that a USt code is also a union code~\cite{Grassl1997}.


\section{Projectors on Union Codes}

In this section, we show how to find the projector of a union code. Let $ M $, $P(M)$ and $P_M$ be a measurement operator, the space stabilized by $M$, and the orthogonal projector on $P(M)$, respectively. We have $M =2P_M-I$. We also use the notation $P_Q$ for the projector of a generic code $ Q $.

Let $M_Q$ be the measurement operator of code $Q$. If $\{\ket{w_1}, \dots, \ket {w_k} \}$ is an orthogonal basis of the code, then
\[
	 P_Q= \sum_{i=1}^k \ket{w_i}\bra{w_i},
\]
and
\begin{eqnarray*}
   M_Q &=& 2\sum_{i=1}^k \ket{w_i}\bra{w_i} -I \\
       &=& -\prod_{i=1}^k (I -2 \ket{w_i}\bra{w_i}).
\end{eqnarray*}

Suppose that $Q$ is a union code $Q=C_1\oplus C_2$ and let $P_1$ and $P_2$ be the projectors on codes $C_1$ and $C_2$, respectively. Since $C_1\perp C_2$, the projector on $Q$ is $P=P_1\oplus P_2$ and the measurement operator is $M_Q = 2P_1\oplus P_2 -I$.

Suppose that the union code $Q$ has the form
\begin{equation}\label{code:1}
	 \mathcal{Q}=\bigoplus_{t\in T} tQ_0,
\end{equation}

where $Q_0$ is a quantum code. How does one find the code measurement operator using operator $M_0$ of code $Q_0$? We answer this question below.

Let \(M_1\) and \(M_2\) be two commutative measurement operators, we define the measurement operator \(M_1\wedge M_2\) as the operator that stabilize the space \(P(M_1)\bigcap P(M_2)\). We have that the projector associated to this operator satisfies \(P_{M_1\wedge M_2} = P_{M_1}P_{M_2}\).

Let $A$ and $B$ be two vector subspaces of $\C^m$. Define the following associative operation:
\begin{equation}
         A\triangle B = (A\cap B^\perp )\oplus (A^\perp \cap B).
\end{equation}
Let $M_1$ and $M_2$ be two commutative measurement operators. The measurement operator associated with the space $P(M_1)\triangle P(M_2)$ is denoted by $M_1\boxplus M_2$, that is
\begin{eqnarray}
         P(M_1\boxplus M_2) & = & P(M_1)\triangle P(M_2) \nonumber\\
         &=&(P(M_1)\cap P(M_2)^\perp)\oplus \nonumber \\
         & & (P(M_1)^\perp\cap P(M_2)).
\end{eqnarray}
Note that $P_{M_1\boxplus M_2} = P_{M_1}(I - P_{M_2}^\perp) +  P_{M_2}(I - P_{M_1}^\perp)$ and
\begin{eqnarray}
         M_1\boxplus M_2 & = & 2P_{M_1\boxplus M_2} - I \nonumber \\
         & = & 2P_{M_1}(I - P_{M_2}^\perp)+\nonumber \\
         & &  \,  P_{M_2}(I - P_{M_1}^\perp)) - I \nonumber \\
         & = & -(2P_{M_1} - I)(2P_{M_2}-I )\nonumber \\
         & = & - M_1M_2.
\end{eqnarray}

Consider again code $\mathcal Q$ of equation (\ref{code:1}). Let $Q_0$ be the stabilizer and $S=\{G_i\}_{i=1}^k$ is a set of stabilizers of code $Q_0$. Define $Q_t = tQ_0$ and $M_{it}=tG_it^\dagger$. Note that $M_{it}$ stabilizes $t\ket{w}$, for $\ket{w}\in Q_0$. We have $Q_t=\bigcap_{i=1}^k P(M_{it})$. Then
\begin{equation}
         \mathcal{Q} = \bigoplus_{t\in T}\bigcap_{i=1}^k P(M_it) =    \mathop{\bigtriangleup}_{t\in T}\bigcap_{i=1}^k P(M_{it})
\end{equation}
and the associated measurement operator is
\begin{equation}
         M_{\mathcal{Q}} = \mathop\boxplus_{t\in T}\bigwedge_{i=1}^kM_{it}.
\end{equation}

When code $Q_0$ is not additive (stabilizer), we use the classical way to build the projector for a vector space employing an orthonormal basis. Let $B=\{\ket{w_i} \}_{i=1}^l $ be an orthonormal basis of $Q_0$. The set $\bigcup_{t\in T}tB$ is an orthonormal basis for the code $\mathcal{Q}$. Therefore, the projector of $\mathcal{Q}$ is
\begin{equation}
         P_\Qc = \sum_{t\in T}\sum_{i=1}^l t\ket{w_i}\bra{w_i}t^\dagger.
\end{equation}

\section{Measurements on Union and USt Codes}

Let $\E$ be a set of correctable errors of a quantum code and $D\subset\E$. Define the nondegenerate complement of $D$ in $\E$ as the set
\begin{equation}
         \E_{\overline{D}} = \{E\in \E ; C_{EF}=0,\;\; \forall F\in D \}.
\end{equation}
When $\E$ is nondegenerate, $\E_{\overline{D}}$ is exactly the complement of $D$, that is, $\E_{\overline{D}}=\E\setminus D$. Following we use the notation \(D(Q)=\{h\ket{\psi}; h\in D,\; \ket{\psi}\in Q\}\).
%
%

\begin{thm}\label{detecta0}
         Let $Q$ be a CWS code and $D$ a finite commutative group of correctable errors. Then $Q_{D}=D(Q)$ is a USt code.
\end{thm}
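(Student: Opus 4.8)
The plan is to exhibit $Q_D$ explicitly in the union-stabilizer form $\bigoplus_{t\in\mathcal{T}} tC$ of Section III, with a stabilizer base code $C$ manufactured from $D$. First recall the structure of the CWS code: its canonical basis $\{w_l\ket{S}\}_{l=1}^{K}$ shows that $Q=\bigoplus_{l=1}^{K} w_l C_0$, where $C_0=\C\ket{S}$ is the one-dimensional stabilizer code with stabilizer group $S$. Thus $Q_D=D(Q)$ is the span of the vectors $\{h\,w_l\ket{S}\}$ with $h\in D$ and $1\le l\le K$, and the whole problem is to reorganize this spanning set into orthogonal Pauli translates of a single stabilizer code.

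The central step is to prove that $C_D:=\mathrm{span}\{h\ket{S}: h\in D\}=D(C_0)$ is itself a stabilizer code, and this is where I would use that $D$ is a \emph{commutative group}. Set $S_D:=\{g\in S : gh=hg\ \forall h\in D\}$, the part of $S$ lying in the centralizer of $D$; since each $g\in S_D$ commutes with every $h\in D$ and fixes $\ket{S}$, it fixes every $h\ket{S}$ and hence stabilizes $C_D$. I would then argue that $C_D$ is \emph{exactly} the code stabilized by $S_D$ by a dimension count: commutativity of $D$ makes the nonredundant translates $h\ket{S}$ mutually orthogonal (they sit in distinct cosets modulo $\mathcal{C}_{C_0}$), so $\dim C_D=[D:D\cap\mathcal{C}_{C_0}]$, and a symplectic/character counting argument identifies this index with $d^{\,n}/|S_D|$, the dimension of the code fixed by $S_D$. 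This matching is the heart of the argument, and both the group axioms and the abelianness of $D$ are indispensable for it; it is the step I expect to be the main obstacle.

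Granting that $C_D$ is a stabilizer code, I would next spread the CWS basis over it. Because all of the $h$ and $w_l$ lie in the Pauli group $G_d^n$, any two of them commute up to a scalar, so $w_l^{\dagger} h\, w_l=\beta_{l,h}\,h$ with $\beta_{l,h}$ a power of $\omega$. Consequently $D(w_l C_0)=w_l\,(w_l^{\dagger}Dw_l)(C_0)=w_l\,D(C_0)=w_l C_D$, the phases being irrelevant to the span. Summing over $l$ then gives $Q_D=\sum_{l=1}^{K} w_l C_D$.

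Finally I would put this sum into union-stabilizer form. Reducing $\{w_l\}$ to a set $\mathcal{T}$ of representatives lying in distinct cosets of the centralizer $\mathcal{C}_{C_D}$, the orthogonality fact recorded in Section III ($t_1C\perp t_2C$ whenever $t_1^{\dagger}t_2\notin\mathcal{C}_{C_D}$) makes the surviving blocks $w_lC_D$ pairwise orthogonal, so that $Q_D=\bigoplus_{t\in\mathcal{T}} tC_D$. Since $C_D$ is a stabilizer code and $\mathcal{T}$ is a subset of a transversal of $G_d^n/\mathcal{C}_{C_D}$, this is by definition a USt code; the correctability of the errors in $D$ is what guarantees that no block collapses spuriously, so that the decomposition faithfully represents $Q_D$.
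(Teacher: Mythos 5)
Your proposal is correct and follows essentially the same route as the paper: decompose $D(Q)$ into the translates $w_l\,D(\ket{S})$ (absorbing the commutation phases of $w_l$ with $D$ into a rescaled group $D_l$), observe that $D(\ket{S})$ spans a stabilizer code $Q_0$, and show the translates are pairwise orthogonal to obtain the USt form $\bigoplus_l w_l Q_0$. The only notable differences are that the paper obtains orthogonality of the blocks directly from the error-correction conditions $\bra{w_i}\alpha_1^\dagger\alpha_2\ket{w_j}=0$ for $\alpha_1,\alpha_2\in D$ (rather than via your coset-of-the-centralizer argument, which needs the same correctability input anyway), and that the paper simply asserts the stabilizer-code claim you flag as the main obstacle and sketch via a dimension count.
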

\begin{proof}
         Let $ B=\{\ket{w_1},\dots , \ket{w_k}\} $ be a basis for code $Q$. Then $D(B) = \bigcup_{\alpha\in D} \alpha(B) =\bigcup_{\alpha\in D}\bigcup_{i=1}^k \alpha\ket{w_i} = \bigcup_{i=1}^k D(\ket{w_i})$. Note that $D(\ket{w_i})$ is an additive CWS code, that is, a stabilizer code. Now we will show that $D(\ket{w_i})$ and $D(\ket{w_j})$ are mutually orthogonal, for $i\neq j$.

         Let $\alpha_1, \alpha_2 \in D$. Then $\alpha_1^\dagger \alpha_2$ a detectable error  and the error correction conditions $ \bra{w_i}\alpha_1^\dagger  \alpha_2 \ket{w_j} = 0 $ imply that $D(\ket{w_i})$ is orthogonal to $D(\ket{w_j})$.

         We have $\ket{w_i}=w_i\ket 0$ for word operators $w_i$, and $w_id_j = \alpha_{ij}d_jw_i$, where $d_j\in D$ and $\alpha_{ij}\in \C$. Let $D_i = \{\alpha_{ij}d_j ; d_j\in D \}$. We have that the codes generated by $D({\ket{0}})$ and by $D_i(\ket{0})$ are the same. Then $\bigcup_{i=1}^k D(\ket{w_i}) = \bigcup_{i=1}^k w_iD_i(\ket{0}) $.

         Let $Q_0$ be the code generated by $D(\ket 0)$. We have
\[
                  D(Q) = \bigoplus_{i=1}^k w_iQ_0,
\]
         that is, $D(Q)$ is USt.
\end{proof}


\begin{thm} \label{detecta}
         Let $Q$ be a CWS code defined from a stabilizer set $S$ and by a classical code $C$. Let $D$ be a commutative group, the elements of which are in the set of correctable errors $\E$. Then the USt code $Q_{D}=D(Q)$ detects all errors in $\E_{\overline D}$.
\end{thm}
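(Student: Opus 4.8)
The plan is to prove detection by checking the error-detection condition of Section II directly on an orthonormal basis of $Q_D$. From Theorem~\ref{detecta0} we already know $Q_D=D(Q)=\bigoplus_{i=1}^k w_iQ_0$ is a USt code whose codewords are spanned by the states $d\ket{w_i}$, with $d\in D$ and $\ket{w_i}=w_i\ket{0}$ a codeword of $Q$; these states span $Q_D$ and, by Theorem~\ref{detecta0}, refine to an orthonormal basis. So it suffices to show that every $E\in\E_{\overline{D}}$ satisfies $\bra{\psi_a}E\ket{\psi_b}=C_E\,\delta_{ab}$ on that basis.

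First I would reduce a matrix element of $E$ on $Q_D$ to one on the original code $Q$. Taking two basis vectors $d_a\ket{w_{i_a}}$ and $d_b\ket{w_{i_b}}$ and using that $E,d_a,d_b$ all lie in $G_d^n$, commuting the Pauli factors past one another gives $\bra{w_{i_a}}d_a^\dagger E\, d_b\ket{w_{i_b}}=\eta\,\bra{w_{i_a}}E F\ket{w_{i_b}}$ for a phase $\eta$ and $F=d_a^\dagger d_b$. The essential structural input is that $D$ is a group, so $F\in D$; thus every matrix element of $E$ on $Q_D$ is, up to a phase, the matrix element on $Q$ of a single \emph{shifted} operator $EF$ with $F$ ranging over $D$.

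Next I would apply the correction conditions on $Q$ to the pair $(E^\dagger,F)$, obtaining $\bra{w_{i_a}}E F\ket{w_{i_b}}=C_{E^\dagger F}\,\delta_{i_a i_b}$, which already annihilates all terms with $i_a\neq i_b$. It remains to show the constant $C_{E^\dagger F}$ vanishes for every $F\in D$. Here I would use that degeneracy classes are governed by $\cli_S$: since $C_{E^\dagger F}\neq0$ would force $E^\dagger$ and $F$ into the same class, i.e. $\cli_S(E^\dagger)=\cli_S(F)$, and since $\cli_S(E^\dagger)=-\cli_S(E)$ and $\cli_S(F^\dagger)=-\cli_S(F)$ with $F^\dagger\in D$, this is equivalent to $E$ and $F^\dagger$ lying in the same class. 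That contradicts $E\in\E_{\overline{D}}$. Hence $C_{E^\dagger F}=0$ for all $F\in D$, so every matrix element of $E$ on $Q_D$ is zero and the detection condition holds with $C_E=0$.

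I expect the main delicate point to be the uniformity of the diagonal entries, since the conjugation $d^\dagger E d$ carries a $d$-dependent phase that would seem to spoil a nonzero common value $C_E$. This is resolved automatically by the case $F=I$ of the previous step: because $I\in D$, the element $E\in\E_{\overline{D}}$ cannot share the degeneracy class of $I$, so $E$ is nondegenerate on $Q$, $C_E=0$, and the offending phase merely multiplies zero. The remaining work is routine bookkeeping: confirming that the representatives $d\ket{w_i}$ are orthonormal (already supplied by Theorem~\ref{detecta0}), that $\E$ is closed under adjoints so that $E^\dagger$ is available in the correction conditions, and that the Pauli commutation phases never change whether a term vanishes.
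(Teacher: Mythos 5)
Your proof follows essentially the same route as the paper's: reduce $\bra{w_i}d_a^\dagger E\, d_b\ket{w_j}$, up to a Pauli commutation phase, to $\bra{w_i}(d_a^\dagger d_b)E\ket{w_j}$ with $d_a^\dagger d_b\in D$, and kill it via the defining condition $C_{EF}=0$ of $\E_{\overline D}$. The only substantive difference is that you explicitly treat the diagonal terms (through the case $F=I\in D$, forcing $C_E=0$), a point the paper's proof leaves implicit by considering only orthogonal pairs $\alpha_1\ket{w_i}\perp\alpha_2\ket{w_j}$.
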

\begin{proof}
         Let $E\in \E_{\overline{D}}$, $\alpha_1,\alpha_2\in D$ and $\ket{w_i},\ket{w_j}\in Q$ such that $\alpha_1\ket{w_i}\perp \alpha_2\ket{w_j}$. Then $\alpha_1^\dagger \alpha_2 \in D\subset \E$ and
\[
                  \bra{w_i}\alpha_1^\dagger E \alpha_2\ket{w_j} = a\bra{w_i}\alpha_1^\dagger  \alpha_2 E\ket{w_j} = 0,\;\;\;\ a\in \C,
\]
         that is, $E$ is detectable.
\end{proof}

We have described how to detect errors in USt codes for the nonbinary case, generalizing the method given in Ref.~\cite{Li2010} for binary codes. When working with the latter case, it is enough to use the criteria defined in Theo.~\ref{detecta}. However, for nonbinary codes, the above theorem is not enough. In order to find the error in this case, which is the main contribution of this paper, we make use of union codes.
%


\begin{thm} \label{detecta:2}
         Let $Q$ be a CWS code and $D$ a commutative group. $D\setminus \{I\}$ is a set of nondegenerate and correctable errors. Let $D_l\subset D$ be the set $\{d_{l_1}^{k_1}\cdots d_{l_t}^{k_t} ; 0<k_i<d\; \forall i \}$, where $d_{l_i}$ are $t$ independent generators of $D$. If $D_{l_s}^r = D_l \setminus \{d_{l_1}^{k_1}\cdots d_{l_s}^r\cdots d_{l_t}^{k_t} ; 0<k_i<d\; \forall i\neq s \}$, then the code $\mathbb{D}(Q)$ where $\mathbb{D}=D_l\setminus D_{l_s}^r$ detects $\E = D_{l_s}^r $.
\end{thm}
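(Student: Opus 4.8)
The plan is to verify the quantum error-detection condition of Section~II directly on an explicit orthonormal basis of $\mathbb{D}(Q)$, reducing every matrix element to the nondegeneracy of the underlying group $D$. Write each element of $D_l$ as $d_{l_1}^{k_1}\cdots d_{l_t}^{k_t}$ with $0<k_i<d$; by construction $\mathbb{D}=D_l\setminus D_{l_s}^r$ is exactly the set of such elements whose $s$-th exponent equals $r$, while $\E=D_{l_s}^r$ is the set whose $s$-th exponent lies in $\set{1,\dots,d-1}\setminus\set r$. Let $\set{\ket{w_1},\dots,\ket{w_k}}$ be a basis of the CWS code $Q$, with $\ket{w_a}=w_a\ket 0$.

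First I would fix the orthonormal basis $\set{h\,w_a\ket 0 : h\in\mathbb{D},\ 1\le a\le k}$ of $\mathbb{D}(Q)$. For $h_1,h_2\in\mathbb{D}$ one has $\bra 0 w_a^\dagger h_1^\dagger h_2 w_b\ket 0=\bra{w_a}(h_1^{-1}h_2)\ket{w_b}$, since $h_1^\dagger=h_1^{-1}$ for unitaries and $D$ is commutative. If $h_1\neq h_2$, then $h_1^{-1}h_2\in D\setminus\set I$ is nondegenerate and correctable, so the correction condition forces this inner product to vanish; if $h_1=h_2$ it equals $\delta_{ab}$. This is the same mechanism used in Theorem~\ref{detecta0}, and it shows these vectors form an orthonormal basis even though $\mathbb{D}$ is not a subgroup.

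The core step is the detection identity. Given $E\in\E$ and two basis vectors $\ket{\psi_i}=h_1w_a\ket0$, $\ket{\psi_j}=h_2w_b\ket0$, I would compute
\[
   \bra{\psi_i}E\ket{\psi_j}=\bra 0 w_a^\dagger\,(h_1^{-1}E\,h_2)\,w_b\ket 0=\bra{w_a}F\ket{w_b},\qquad F:=h_1^{-1}h_2E\in D,
\]
using commutativity of $D$ to collapse $h_1^{-1}Eh_2$ into the single group element $F$. The crucial observation is that $F\neq I$: since $h_1$ and $h_2$ both carry exponent $r$ in the $s$-th generator these cancel, so the $s$-th exponent of $F$ equals the $s$-th exponent $c_s$ of $E$; because $E\in D_l$ gives $c_s\in\set{1,\dots,d-1}$ and $E\in D_{l_s}^r$ gives $c_s\neq r$, we have $c_s\neq 0\ (\mathrm{mod}\ d)$. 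By independence of the generators $F$ has nontrivial Pauli part, hence $F\in D\setminus\set I$ is nondegenerate and correctable, so $\bra{w_a}F\ket{w_b}=C_F\delta_{ab}=0$. Therefore $\bra{\psi_i}E\ket{\psi_j}=0$ for all $i,j$, which is exactly the detection condition with $C_E=0$, proving that $\mathbb{D}(Q)$ detects every $E\in\E=D_{l_s}^r$.

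The main obstacle is entirely the bookkeeping guaranteeing $F\neq I$: one must be sure the $s$-th exponent never collapses to $0\bmod d$, which is precisely why the construction pins $\mathbb{D}$ to the value $r$ in coordinate $s$ and removes that same slice from $\E$. A secondary point to check carefully is that $F$ is a genuine nondegenerate error rather than a pure phase $\omega^j I$; the nontrivial $s$-th exponent rules this out, so the hypothesis that all of $D\setminus\set I$ is nondegenerate applies to $F$ and closes the argument.
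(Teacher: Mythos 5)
Your proof is correct and follows essentially the same route as the paper's: both reduce $\bra{\psi_i}E\ket{\psi_j}$ to a matrix element of the single group element $\alpha_1^\dagger\alpha_2 E$ (your $F$), observe that its $s$-th exponent is nonzero because the two copies of $r$ cancel while $E$ contributes a nonzero exponent, and invoke nondegeneracy of $D\setminus\{I\}$ to conclude the matrix element vanishes. Your extra check that $\{h\,w_a\ket{0}\}$ is an orthonormal basis is a detail the paper leaves implicit (it is handled in Theorem~\ref{detecta0}), but it does not change the argument.
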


\begin{proof}
         Let \(\alpha_1, \alpha_2 \in\mathbb{D}  \), we have that \(\alpha_1^\dagger \alpha_2  = a_1d_{l_1}^{k_1}\cdots d_{l_s}^0\cdots d_{l_t}^{k_t} \), \(a_1\in \C\). Now let \(E\in \E\), \(\alpha_1^\dagger   \alpha_2 E = d_{l_1}^{h_1}\cdots d_{l_s}^{h_s}\cdots d_{l_t}^{h_t} \)  with \(h_s\neq 0\). Then \(\alpha_1^\dagger   \alpha_2 E \in D\setminus \{I\}\) ie \(\alpha_1^\dagger   \alpha_2 E\) is a nondegenerate error of code \(Q\). So
         \[
                  \bra{w_i}\alpha_1^\dagger E  \alpha_2 \ket{w_j} = \bra{w_i}\alpha_1^\dagger   \alpha_2 E \ket{w_j} = C_{\alpha_1^\dagger \alpha_2 E}\delta_{ij} = 0,
         \]
         where \(\ket{w_i},\ket{w_j}\) belongs to a orthonormal basis of code \(Q\),  in other words, $\mathbb{D}(Q)$ detect \(E\).
\end{proof}


\section{Decoding}

In the previous section, we have showed that, given a quantum code and a set of correctable errors, we can build another code, which includes the first, detecting the errors. However, we want to correct errors and not only to detect them.

Suppose that the set of correctable errors $\E$ can be decomposed into a union of finite abelian groups and nondegenerate elements, that is, $\E = \bigcup_{j=1}^t D_j$. Using Theo.~\ref{detecta}, we can find the group that contains the error. To correct the error we use the following strategy: Suppose that the error is in group $D_j$ and it has $m$ generators, that is, $D_j=\langle d_1 ,\dots ,d_m \rangle$. Define $D_j^l$ as the group with a generator less, that is, $D_j^l=\langle d_1,\dots, \widehat{d_l},\dots ,d_m \rangle$. By performing $m$ measurements, we find out that the error has the form
\begin{equation}
         E= d_{i_1}^{k_1}\cdots d_{i_t}^{k_t},
\end{equation}
where $t\leq m$ and $0< k_i <d-1$. It remains to find the value of $k_i$.

We use Theo.~\ref{detecta:2}. Take $D_j$ to be the group, and $D_l$ the subset of $D_j$, with all errors found above. For all $s\in \{1,\dots,t\}$ and $r\in\{1,\dots, d-1\}$, we perform the error detection in code $D_{l_s}^r(Q)$. This is the last step to find the error.

\section{Example}

We will use the decoder above described in the family of codes \(((5,d,3))_d\) with \(d>3\) presented in Ref.~\cite{Hu2008}, which are nonadditive CWS codes. This family is described by the stabilizer group
\[S=\gera{X_1Z_2Z_5, Z_1X_2Z_3, Z_2X_3Z_4, Z_3X_4Z_5, Z_1Z_4X_5}\]
and by the word operators set \(W=\{Z^{v_j},Z^a,Z^b\}\), where \(v_j=(j,j,j,j,j)\), for \(j\not\in \{2,d-1\}\), \(a=(2,-1,-1,2,-1)\) and \(b=(-1,2,2,-1,2)\). Those codes correct all weight-1 quantum errors. Let \(\E\) be the set of all weight-1 errors including \(I\).

To prove that this code can correct every weight-1 error, we use function \(\cli_S\) and the classical code associated to set of word operators. The classical code is given by \(C=\{v_j,a,b\}\) and
\begin{eqnarray*}
\cli_S(\E)& = &\{(r,0,0,0,0), (0,r,0,0,0), (0,0,r,0,0),\\
&&(0,0,0,r,0), (0,0,0,0,r), (r,0,0,r,0),\\
&& (0,r,0,0,r), (r,0,r,0,0), (0,r,0,r,0),\\
&&(0,0,r,0,r), (0,0,r,h,r), (r,0,0,r,h),\\
&&(h,r,0,0,r), (r,h,r,0,0), (0,r,h,r,0); \\
&& 0\leq r < d ,\;\; 0<h<d\}.
\end{eqnarray*}
Note that the difference between any two vectors in \(C\) can not be equal to the sum of any two elements of \(\cli_S(\E)\). This shows that \(C\) can correct weight-1 errors.

%
%
%
%
%


%
%
%
%

As a consequence of the graph structure of the stabilizer, every nonbinary Pauli error acting on $\ket{0}$ can be equivalently replaced by some qudit phase flip errors~\cite{Hu2008}. We may consider all the word operators $w_i$ in the format $w_i=Z^{c_i}$.

Now, write $\E=\bigcup_{i=1}^5 D_i$, where $D_i$ are the groups generated by $\{Z^{Cl_{S}(X_i)},\,Z^{Cl_{S}(Z_i)}\}$, $1\le i\le 5$. These groups satisfy Theo.~\ref{detecta0} and~\ref{detecta}. So, the decoder can be applied and with at most $5-1=4$ USt measurements in order to detect the group error $D_i$, that is, to locate the error. After these measurements, we will perform two more USt measurements to determine if some of the generators of $D_i$ are missing in the expression of the error. Then, we construct the groups $D_{i}^{1}=\langle Z^{Cl_{S}(X_i)} \rangle$ and $D_{i}^{2}=\langle Z^{Cl_{S}(Z_i)} \rangle$, perform measurements according to the USts codes $D_{i}^{1}(Q)$ and $D_{i}^{2}(Q)$. We can find out, for example, in the worst case, that no generator of $D_i$ is missing in the expression of the error. To obtain explicitly the power associated with the generators, we use Theo.~\ref{detecta:2}. For each fixed $r_{1}$, $0<r_{1}<d$, consider the sets $D_{r}^{1}=\{(Z^{Cl_{S}(X_i)})^{r_{1}} (Z^{Cl_{S}(Z_i)})^{k}\}$, $0<k<d$, and the union code $D_{r}^{1}(Q)$. Performing a measurement in this union code, we obtain power $r_1$ of the first generator. To obtain the second power, we repeat the procedure with the set $D_{r}^{1}=\{(Z^{Cl_{S}(X_i)})^{k} (Z^{Cl_{S}(Z_i)})^{r_{2}}\}$, $0<k<d$, obtaining the expression of the error $E=(Z^{Cl_{S}(X_i)})^{r_{1}} (Z^{Cl_{S}(Z_i)})^{r_{2}}$. The number of union code measurements is $2d-2$, because we perform $d-1$ measurements to obtain power $r_{1}$ and $d-1$ measurements for power $r_{2}$.

\section{Conclusion}

The formalism of CWS codes is a procedure to find both, additive and  nonadditive codes, which generalizes the formalism of stabilizer codes. It was used to build some optimal nonadditive codes, such as codes ((9,12,3)) and ((10,24,3))~\cite{Yu2008}. A decoding procedure for binary codes of this class was described recently~\cite{Li2010}.

We have described a decoding procedure for the nonbinary case. Part of the procedure is a straightforward generalization of the binary case, using union of stabilizer codes (USt). In the binary case, dealing with USts codes is enough, whereas in the nonbinary case, after finding the group error $D$ and $D_l\subset D=\{d_{l_1}^{k_1}\cdots d_{l_t}^{k_t} ; 0<k_i<d\; \forall i \}$, where $d_{l_i}$ are $t$ independent generators of $D$, we have to use union code measurements to obtain $k_i$ using Theo.~\ref{detecta:2}.

\section*{Acknowledgement}

We thank CNPq's financial support.



\bibliography{quantum-codes}

\bibliographystyle{IEEEtran}

\end{document}